\def\<#1>{\langle#1\rangle}
\let\set\mathbbm
\def\vec#1{\mathbf{#1}}
\def\spread{\operatorname{Spread}}
\def\disp{\operatorname{Disp}}
\def\lcm{\operatorname{lcm}}
\def\K{\set K}
\newtheorem{theorem}{Theorem}
\newtheorem{lemma}{Lemma}
\newtheorem{definition}{Definition}
\newtheorem{example}{Example}
\newtheorem{algorithm}{Algorithm}
\begin{document}

\title{A Refined Denominator Bounding Algorithm for Multivariate Linear Difference Equations}

\numberofauthors{2}
\author{
 \alignauthor Manuel Kauers\titlenote{Supported by the Austrian FWF grant Y464-N18 and 
     the EU grant PITN-GA-2010-264564.}\\[\smallskipamount]
      \affaddr{RISC}\\
      \affaddr{Johannes Kepler University}\\
      \affaddr{4040 Linz (Austria)}\\[\smallskipamount]
      \email{mkauers@risc.jku.at}
 \alignauthor Carsten Schneider\titlenote{Supported by the Austrian FWF grant P20347-N18
     and the EU grant PITN-GA-2010-264564.}\\[\smallskipamount]
      \affaddr{RISC}\\
      \affaddr{Johannes Kepler University}\\
      \affaddr{4040 Linz (Austria)}\\[\smallskipamount]
      \email{cschneid@risc.jku.at}
}

\maketitle

\begin{abstract}
  We continue to investigate which polynomials can possibly occur as factors in the
  denominators of rational solutions of a given partial linear difference equation.
  In an earlier article we had introduced the distinction between periodic and
  aperiodic factors in the denominator, and we gave an algorithm for predicting
  the aperiodic ones. Now we extend this technique towards the periodic case
  and present a refined algorithm which also finds most of the periodic factors.
\end{abstract}

\kern-\medskipamount

\category{I.1.2}{Computing Methodologies}{Symbolic and Algebraic Manipulation}[Algorithms]

\kern-\medskipamount

\terms{Algorithms}

\kern-\medskipamount

\keywords{Difference Equations, Rational Solutions}

\section{Introduction}

The usual approach for finding rational solutions of linear difference equations
with polynomial coefficients is as follows. First one constructs a nonzero polynomial~$Q$ such
that for any solution $y=p/q$ of the given equation we must have $q\mid Q$.
Such a polynomial~$Q$ is called a denominator bound for the equation.  Next, the
denominator bound is used to transform the given equation into a new equation
with the property that a polynomial~$P$ solves the new equation if and only if
the rational function $y=P/Q$ solves the original equation.  Thus the
knowledge of a denominator bound reduces rational solving to polynomial solving.

The first algorithm for finding a denominator bound~$Q$ was given by Abramov in
1971~\cite{abramov71,Abramov:89b,abramov91}. During the past fourty years, other
algorithms were found~\cite{paule95,hoeij98,bostan06,chen08b} and the technique
was generalized to matrix equations~\cite{abramov98a,barkatou99} as well as to
equation over function fields~\cite{Petkov:92,bronstein00,schneider04c}. Last
year~\cite{kauers10b} we made a first step towards a denominator bounding
algorithm for equations in several variables (PLDEs). We found that some factors
of the denominator are easier to predict than others. We called a polynomial
periodic if it has a nontrivial gcd with one of its shifts, and aperiodic
otherwise. For example, the polynomial $2n-3k$ is periodic because shifting it
twice in~$k$ and three times in~$n$ leaves it fixed.  We say that it is periodic
in direction $(3,2)$. An example for an aperiodic polynomial is $n k +1$. The
main result of last year's paper was an algorithm for determining aperiodic
denominator bounds for PLDEs, i.e., we can find $Q$ such that whenever
$y=\frac{p}{uq}$ solves the given equation and $q$ is aperiodic, then $q\mid Q$.

The present paper is a continuation of this work. We now turn to periodic
factors and study under which circumstances a slightly adapted version of last
year's algorithm can also predict periodic factors of the denominator. We
propose an algorithm which finds the periodic factors for almost all
directions. Every equation has however some directions which our algorithm does
not cover. But if, for instance, we have a system of two equations and apply our
algorithm to each of them, then the two bounds can under favorable circumstances
(which can be detected algorithmically) combined to a denominator bound which
provably contains all the factors that can possibly occur in the denominator of
any solution of the system. This was not possible before. So while until now we
were just able to compute in all situations some factors, we can now also find
in some situations all factors.

Despite this progress, we must confess that our results are still of a somewhat
academic nature because denominator bounds in which some factors are missing are
not really enough for solving equations. And even when a full denominator bound
is known, it still remains to find the polynomial solutions of a PLDE, and
nobody knows how to do this---the corresponding problem for differential
equations is undecidable. But in practice, we can heuristically choose a degree
bound for finding polynomial solutions, and knowing parts of the possible
denominators is certainly better than knowing nothing, and the more factors we
know, the better. Apart from this, we find it interesting to see how far the
classical univariate techniques carry in the multivariate setting, and we would
be curious to see new ideas leading towards algorithms which also find the
factors that we still miss.

\section{Preparations}

Let $\K$ be a field of characteristic zero. We consider polynomials and rational functions
in the $r$ variables $n_1,\dots,n_r$ with coefficients in~$\K$. For each variable~$n_i$,
let $N_i$ denote the shift operator mapping $n_i$ to $n_i+1$ and leaving all other variables
fixed, so that
\begin{alignat*}1
  &N_i q(n_1,\dots,n_r)\\
  &\qquad=q(n_1,\dots,n_{i-1},n_i+1,n_{i+1},\dots,n_r)
\end{alignat*}
for every rational function~$q$. Whenever it seems appropriate, we will use multiindex
notation, writing for instance $\vec n$ instead of $n_1,\dots,n_r$ or $N^{\vec i}$
for $N_1^{i_1}N_2^{i_2}\cdots N_r^{i_r}$.

We consider equations of the form
\begin{equation}\label{eq:main}
  \sum_{\vec s\in S}a_{\vec s}N^{\vec s}y=f
\end{equation}
where $S\subseteq\set Z^r$ is finite and nonempty, $f\in\K[\vec n]$ and $a_{\vec s}\in\K[\vec n]\setminus\{0\}$
($\vec s\in S$) are given, and $y$ is an unknown rational function.
Our goal is to determine the polynomials $p\in\K[\vec n]$ which may possibly occur in the denominator
of a solution~$y$, or at least to find many factors of~$p$.

We recall the following definitions and results from our previous paper~\cite{kauers10b}.

\begin{definition} Let $p,q,d\in\K[\vec n]$.
\begin{enumerate}
\item The set $\spread(p,q):=\{\,\vec i\in\set Z^r:\gcd(p,N^{\vec i}q)\neq1\,\}$ is called
  the \emph{spread} of $p$ and~$q$.
  For short, we write $\spread(p):=\spread(p,p)$.
\item The number $\disp_k(p,q):=\max\{\,|i_k|:(i_1,\dots,i_r)\in\spread(p,q)\,\}$ is called
  the \emph{dispersion} of $p$ and $q$ with respect to $k\in\{1,\dots,r\}$. (We set
  $\max A:=-\infty$ if $A$ is empty and $\max A:=\infty$ if $A$ is unbounded.)
\item The polynomial $p$ is called \emph{aperiodic} if $\spread(p)$ is finite, and \emph{aperiodic}
  otherwise.
\item The polynomial~$d$ is called an \emph{aperiodic denominator bound} for equation~\eqref{eq:main}
  if $d\neq0$ and every solution~$y$ can be written as $\frac{a}{ub}$ for some $a,b,u\in\K[\vec n]$
  where $u$ is periodic and $b\mid d$.
\item A point $\vec p\in S\subseteq\set Z^r\subseteq\set R^r$ is called a \emph{corner point} of $S$
  if there exists a vector $\vec v\in\set R^r$ such that $(\vec s-\vec p)\cdot\vec v>0$ for all
  $\vec s\in S\setminus\{\vec p\}$. Such a vector $\vec v$ is then called an \emph{inner vector,}
  and the affine hyperplane $H:=\{\vec x\in\set R^r:(\vec x-\vec p)\cdot\vec v=0\}$ is called a
  \emph{border plane} for~$S$.

  \medskip
  \centerline{\epsfig{file=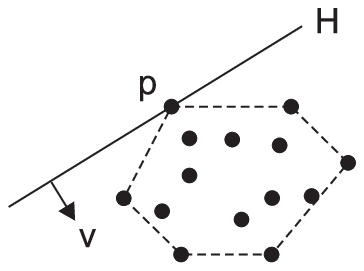}}

\end{enumerate}
\end{definition}

\begin{theorem}\label{thm:1} Let $p,q\in\K[\vec n]$.
\begin{enumerate}
\item If $p$ is irreducible, then $\spread(p)$ is a submodule of~$\set Z^r$ and
  $p$ is aperiodic if and only if $\spread(p)=\{0\}$.
\item\label{thm:1.2} If $p$ and $q$ are irreducible, then there exists $\vec s\in\set Z^r$ such that
  $\vec s+\spread(p,q)$ is a submodule of~$\set Z^r$.
\item There is an algorithm for computing $\spread(p,q)$.
\item There is an algorithm for computing an aperiodic denominator bound for~\eqref{eq:main}
  given the support $S$ and the coefficients $a_{\vec s}$ ($\vec s\in S$).
\end{enumerate}
\end{theorem}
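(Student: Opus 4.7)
The plan is to prove the four parts in order, with the fourth being by far the most involved; since the whole theorem is recalled from~\cite{kauers10b}, I mainly sketch the ideas used there.

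For Part~1, the starting observation is that a shift $N^{\vec i}$ preserves degrees and irreducibility. So if $p$ is irreducible and $\gcd(p,N^{\vec i}p)\neq 1$, then $p$ and $N^{\vec i}p$ are forced to be associates; that is, $N^{\vec i}p=c_{\vec i}\,p$ for some $c_{\vec i}\in\K\setminus\{0\}$. Closure of $\spread(p)$ under addition and negation is then immediate: apply $N^{\vec j}$ to the relation for $\vec i$ and combine it with the relation for $\vec j$. The aperiodicity characterisation follows because every nonzero submodule of $\set Z^r$ is infinite, so a finite spread must be trivial. For Part~2, I would pick some $\vec s\in\spread(p,q)$ (if the spread is empty the claim is vacuous); the same associate argument gives $N^{\vec s}q=c\,p$, and for any further $\vec i\in\spread(p,q)$ we similarly obtain $N^{\vec i}q$ associate to $p$, so $N^{\vec i-\vec s}q$ is associate to $q$. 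The reverse inclusion is analogous, yielding $\spread(p,q)=\vec s+\spread(q)$, a translate of the submodule produced by Part~1.

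For Part~3, I would first factor $p$ and $q$ into irreducibles and reduce to the irreducible case via $\spread(p_1p_2,q)=\spread(p_1,q)\cup\spread(p_2,q)$. In the irreducible case, Part~2 says it is enough to produce one witness $\vec s$ with $p=c\,N^{\vec s}q$ together with generators of the submodule $\spread(q)$. Both can be obtained by writing out the associate relations $p=c\,N^{\vec s}q$ and $N^{\vec j}q=c_{\vec j}\,q$, comparing coefficients in $\K[\vec n]$, and solving the resulting integer linear equations in the exponent vector.

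Part~4 is the main theorem of~\cite{kauers10b}. Its proof is a multivariate adaptation of Abramov's classical construction: using a corner point of~$S$ one obtains a direction in which the PLDE behaves essentially like a univariate equation; resultants taken in this direction yield candidate irreducible factors of the denominator; and the restriction to the aperiodic part bounds how many shifts of each candidate need to be accumulated into $Q$. I expect this termination-and-correctness argument to be the main obstacle in reconstructing the proof, since it requires the careful bookkeeping of multivariate resultants, spread structure, and shift iterations that is developed in detail in our previous paper.
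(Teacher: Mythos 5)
The paper does not prove Theorem~\ref{thm:1}; it explicitly recalls it from the earlier article~\cite{kauers10b} without argument, so there is no in-paper proof to compare your proposal against. Judged on its own terms, your sketches for parts~1 and~3 are the standard arguments and are essentially correct: irreducibility plus a nontrivial gcd forces $p$ and $N^{\vec i}p$ to be associates, from which closure under sums and negation follows formally, and the reduction $\spread(p_1p_2,q)=\spread(p_1,q)\cup\spread(p_2,q)$ together with coefficient comparison for associate relations does yield an algorithm.

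One genuine slip is in part~2. You write that if $\spread(p,q)=\emptyset$ then ``the claim is vacuous,'' but that is not so: for every $\vec s$, the translate $\vec s+\emptyset=\emptyset$ is not a submodule of $\set Z^r$ (a submodule must contain $\vec 0$). So the statement, taken literally, fails for two irreducibles with empty spread (e.g.\ $n_1$ and $n_1^2+1$ over $\set Q$). The correct reading is that the theorem implicitly assumes $\spread(p,q)\neq\emptyset$; under that assumption your argument $\spread(p,q)=\vec s+\spread(q)$ for any fixed $\vec s\in\spread(p,q)$ is fine (modulo replacing $\vec s$ by $-\vec s$ to match the sign in the statement). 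You should say that explicitly rather than call the case vacuous.

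For part~4, you defer entirely to~\cite{kauers10b}, which is honest but does not constitute a reconstruction; your narrative (corner point, univariate-like direction, resultant candidates, aperiodicity to bound the accumulated shifts) matches the high-level shape of the cited algorithm, but none of the real work --- the construction of $R^-,R^+$, the dispersion bound via resultants, and the correctness argument --- is present. That is the part where a blind proof would actually have to be produced, and it is missing.
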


\section{Denominator Bounds modulo a Prescribed Module}

Our goal in this section is to determine the factors whose spread is contained
in some prescribed set~$W\subseteq\set Z^r$. Under suitable assumptions
about~$W$ such factors must pop up in the coefficients of the equation (cf.\
Lemma~\ref{lem:2} below) and under stronger assumptions we can also give a bound
on the dispersion between them (cf.\ Theorem~\ref{thm:disp} below). Using these
two results we obtain a denominator bound relative to~$W$ (cf.\
Theorem~\ref{thm:db} and Algorithm~\ref{algo:1}) below. In the next section, we
then propose an algorithm which combines the denominator bounds with respect to
several sets~$W$. It turns out that by considering only finitely many sets~$W$ 
one can obtain a denominator bound with respect to infinitely many sets~$W$.

\begin{definition}
  Let $W\subseteq\set Z^r$ with $0\in W$.
  A polynomial $d\in\K[\vec n]\setminus\{0\}$ is called a denominator bound of \eqref{eq:main}
  with respect to~$W$ if for every solution $y=p/q\in\K(\vec n)$
  of \eqref{eq:main} and every irreducible factor $u$ of~$q$ with
  $\spread(u)\subseteq W$ we have $u\mid d$.
\end{definition}

Typically, $W$~will be a submodule of $\set Z^r$ or a union of such modules.
The definition reduces to the notion of aperiodic denominator bound
when $W=\{0\}$. In the other extreme, when $W=\set Z^r$ then $d$ is
a ``complete'' denominator bound: it contains all the factors, periodic or not,
that can possibly occur in the denominator of a solution~$y$ of~\eqref{eq:main}.
In general, $d$~predicts all aperiodic factors in the denominator of a solution
as well as the periodic factors whose spread is contained in~$W$.

Denominator bounds with respect to different submodules can be combined as follows.

\begin{lemma}\label{lemma:lcm}
  Let $W_1,\dots,W_m$ be submodules of~$\set Z^r$, and let
  $d_1,\dots,d_m$ be denominator bounds of \eqref{eq:main} with respect to
  $W_1,\dots,W_m$, respectively. Then
  $d:=\lcm(d_1,\dots,d_m)$ is a denominator bound with respect to
  $W:=W_1\cup\cdots\cup W_m$.
\end{lemma}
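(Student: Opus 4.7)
The plan is a short reduction to the single-module case. Fix any solution $y = p/q \in \K(\vec n)$ of~(\ref{eq:main}) and any irreducible factor $u$ of $q$ with $\spread(u) \subseteq W$; the goal is to show $u \mid d$.

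Since $u$ is irreducible, Theorem~\ref{thm:1}(1) tells us that $\spread(u)$ is itself a submodule of $\set Z^r$, not just an arbitrary subset. So the hypothesis places a submodule of $\set Z^r$ inside the finite union $W_1 \cup \cdots \cup W_m$ of submodules. The central claim is that this forces $\spread(u) \subseteq W_{i_0}$ for at least one index~$i_0$. Once this is established, the conclusion is a one-line bookkeeping: $d_{i_0}$ is a denominator bound with respect to $W_{i_0}$ and $\spread(u) \subseteq W_{i_0}$, so $u \mid d_{i_0}$, and therefore $u \mid \lcm(d_1, \ldots, d_m) = d$. Since $y$ and $u$ were arbitrary, $d$ is a denominator bound with respect to~$W$.

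The main obstacle is the covering step. I would attack it by tensoring with $\set Q$: those $W_i$ whose rational span does not contain $\spread(u)\otimes\set Q$ can only contribute a finite union of proper subspaces and therefore occupy a Zariski-thin piece of $\spread(u)$, so the remaining full-rank $W_i$ must cover $\spread(u)$ by themselves. A finite-index/density argument on $\spread(u)$ viewed as a lattice should then force one of these remaining $W_i$ to absorb $\spread(u)$ entirely, since otherwise one could exhibit a lattice point avoiding every $W_i \cap \spread(u)$. This is the only place in the proof where the submodule structure of $\spread(u)$ supplied by Theorem~\ref{thm:1}(1) enters essentially; every other step is a direct unwrapping of the definition of denominator bound with respect to a set.
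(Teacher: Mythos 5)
Your reduction to the purely combinatorial claim---that a submodule $U:=\spread(u)$ of $\set Z^r$ contained in a finite union $W_1\cup\cdots\cup W_m$ of submodules must already lie in a single $W_{i}$---is exactly the paper's reduction, and the bookkeeping around it is fine. What differs is how you attack the covering claim, and that is also where the gap sits.

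The paper argues by elimination: if $U$ contains some $\vec x\notin W_1$, pick $\vec y\in W_1\cap U$, observe that the whole line $\{\vec x+\alpha\vec y:\alpha\in\set Z\}\subseteq U$ avoids $W_1$, pigeonhole two of its points into a common $W_\ell$ with $\ell\ge 2$, deduce $\vec y\in W_\ell$, and conclude $U\subseteq W_2\cup\cdots\cup W_m$; induction then finishes. You instead tensor with $\set Q$ and sort the $W_i$ by rank relative to $U$. The first half of your route is sound: those $W_i$ with $W_i\otimes\set Q\not\supseteq U\otimes\set Q$ meet $U$ in a proper sublattice, and since a $\set Q$-vector space is never a finite union of proper subspaces, the ``full-rank'' $W_i$ by themselves must cover~$U$.

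The second half is a genuine gap. The ``finite-index/density argument'' that is supposed to upgrade ``the full-rank $W_i$ cover $U$'' to ``some full-rank $W_i$ contains $U$'' is not spelled out, and the justification you sketch (``otherwise one could exhibit a lattice point avoiding every $W_i\cap\spread(u)$'') is false for general submodules: $\set Z^2$ is the union of its three index-two sublattices $2\set Z\times\set Z$, $\set Z\times 2\set Z$, and $\{(a,b):a\equiv b\pmod 2\}$, each of full rank, none equal to $\set Z^2$. The argument can be rescued, but only by using that the modules arising here are not arbitrary: the spread of an irreducible polynomial over a field of characteristic zero is a \emph{pure} (saturated) submodule of $\set Z^r$, i.e.\ it is $\set Z^r\cap L$ for some $\set Q$-subspace $L$. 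For a pure $W_i$, the inclusion $W_i\otimes\set Q\supseteq U\otimes\set Q$ already gives $W_i\supseteq U$, and then your Zariski-thin step alone finishes the proof---no density argument is needed. (The paper's own proof quietly uses purity at the step $(\alpha_1-\alpha_2)\vec y\in W_\ell\Rightarrow\vec y\in W_\ell$, so the omission is not yours alone; but once purity is made explicit your route becomes both correct and arguably cleaner than the paper's pigeonhole.) As written, though, the central promotion step is unproved and, without the purity hypothesis, unprovable.
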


\begin{proof}
  Let $u$ be an irreducible factor of the denominator of some solution
  of~\eqref{eq:main} and suppose that $U:=\spread(u)\subseteq W$. It suffices
  to show that then $U\subseteq W_k$ for some~$k$, because then it follows that
  $u\mid d_k\mid d$, as desired.

  We show that if $U$ contains some vector $\vec x\not\in W_1$, then
  $W_1\subseteq W_2\cup\dots\cup W_m$, hence $U\subseteq W_2\cup\cdots\cup W_m$.
  Applying the argument repeatedly proves that $U\subseteq W_k$ for some~$k$.

  Let $\vec y\in W_1$. Since $U$ is a submodule of $\set Z^r$, we have $\vec
  x+\alpha\vec y\in U$ for all $\alpha\in\set Z$.  By assumption $U\subseteq
  W_1\cup\dots\cup W_m$, so each such $\vec x+\alpha\vec y$ must belong to at
  least one module~$W_\ell$ ($\ell=1,\dots,m$). It cannot belong to~$W_1$ though,
  because together with $\vec y\in W_1$ this would imply $\vec x\in W_1$, which
  is not the case. Therefore: For every $\alpha\in\set Z$ there exists
  $\ell\in\{2,\dots,m\}$ such that $\vec x+\alpha\vec y\in W_\ell$.

  Since $\set Z$ is infinite and $m$ is finite, there must be some index
  $\ell\in\{2,\dots,m\}$ for which there are two different
  $\alpha_1,\alpha_2\in\set Z$ with $\vec x+\alpha_1\vec y\in W_\ell$ and
  $\vec x+\alpha_2\vec y\in W_\ell$. Since $W_\ell$ is also a submodule of~$\set
  Z^r$, it follows that $(\alpha_1-\alpha_2)\vec y\in W_\ell$, and finally $\vec
  y\in W_\ell\subseteq W_2\cup\cdots\cup W_m$, as claimed.
\end{proof}

The next result says that factors of denominators tend to leave traces in the
coefficients of corner points of~$S$.

\begin{lemma}\label{lem:2}
  Let $W$ be a submodule of $\set Z^r$ and let $u$ with $\spread(u)\subseteq W$
  be an irreducible factor of the denominator of some solution~$y$ of~\eqref{eq:main}.
  Let $\vec p\in S$ be a corner point of $S$ with an inner vector $\vec v\in\set R^r$
  orthogonal to $W$ (meaning $\vec w\cdot\vec v=0$ for all $\vec w\in W$).
  Then there exists $\vec i\in\set Z^r$ such that $N^{\vec i}u\mid a_{\vec p}$.
\end{lemma}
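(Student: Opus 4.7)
The plan is to clear denominators in~\eqref{eq:main}, isolate the summand corresponding to the corner point~$\vec p$, and then run an extremal argument that exploits $\vec v\perp W$.

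First I would write $y=p/q$ with $\gcd(p,q)=1$, so $u\mid q$, and multiply~\eqref{eq:main} through by $\prod_{\vec s\in S}N^{\vec s}q$. In the resulting polynomial identity, every summand on the left apart from the one at $\vec s=\vec p$, as well as the right-hand side $f\prod_{\vec s\in S}N^{\vec s}q$, contains $N^{\vec p}q$ and hence $N^{\vec p}u$ as a factor. Transposing all those terms to the right and using $\gcd(N^{\vec p}u,N^{\vec p}p)=1$ (which follows from $u\mid q$ and $\gcd(p,q)=1$), one obtains
\[
  N^{\vec p}u \;\bigm|\; a_{\vec p}\prod_{\vec t\in S\setminus\{\vec p\}}N^{\vec t}q.
\]

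The key step is then to replace $u$ by an extremal shift so that none of the $N^{\vec t}q$ on the right can absorb $N^{\vec p}u$. Consider the set~$F$ of irreducible factors $v$ of $q$ that are shifts of~$u$, i.e.\ $v=N^{\vec\beta}u$ up to a unit for some $\vec\beta\in\set Z^r$. This set is nonempty ($u\in F$) and finite (since $q$ has only finitely many irreducible factors), and two admissible $\vec\beta$'s for the same $v$ differ by an element of $\spread(u)\subseteq W\subseteq\vec v^\perp$, so the scalar $\vec\beta\cdot\vec v$ depends only on~$v$. Choose $v^*\in F$ minimizing this quantity, with corresponding exponent~$\vec\beta^*$. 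Apply the displayed divisibility with $u$ replaced by $v^*$, which is legitimate since $\spread(v^*)=\spread(u)\subseteq W$. Either $N^{\vec p}v^*\mid a_{\vec p}$, in which case we are done with $\vec i:=\vec p+\vec\beta^*$, or irreducibility forces $N^{\vec p}v^*\mid N^{\vec t}q$ for some $\vec t\in S\setminus\{\vec p\}$. In the latter case $N^{\vec p-\vec t}v^*=N^{(\vec p-\vec t)+\vec\beta^*}u$ is another element of~$F$, and its $\vec\beta\cdot\vec v$-value equals $\vec\beta^*\cdot\vec v-(\vec t-\vec p)\cdot\vec v$, which is strictly smaller than $\vec\beta^*\cdot\vec v$ because the inner-vector property gives $(\vec t-\vec p)\cdot\vec v>0$. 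This contradicts the minimality of~$v^*$, so the first alternative must hold.

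I expect the main obstacle to be the bookkeeping around the ``up to a unit'' ambiguity in associating an exponent~$\vec\beta$ to an element of~$F$; once one notices that $\spread(u)\subseteq W$ together with $\vec v\perp W$ makes $\vec\beta\cdot\vec v$ depend only on~$v$ and not on the chosen representative, the extremal argument is short. The clearing-denominators step is routine, and the corner-point hypothesis enters only through the single strict inequality $(\vec t-\vec p)\cdot\vec v>0$ that drives the contradiction.
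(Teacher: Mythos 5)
Your proof is correct and uses essentially the same extremal argument as the paper: isolate the $\vec p$-term, pick the shift of $u$ among the factors of $q$ that is extremal in the direction $\vec v$, and use the inner-vector inequality $(\vec t-\vec p)\cdot\vec v>0$ together with $W\perp\vec v$ to derive a contradiction if that shift were absorbed by some $N^{\vec t}q$ with $\vec t\neq\vec p$. The only cosmetic difference is that the paper packages the shifts of $u$ occurring in the denominator via a finite cover of $\spread(u,q)$ by cosets $\vec c_k+W$ (invoking Theorem~\ref{thm:1}.\ref{thm:1.2}) and minimizes $\vec c_i\cdot\vec v$, whereas you index the shifted copies of $u$ in $q$ directly and observe that $\vec\beta\cdot\vec v$ is well-defined on them; your bookkeeping is a bit more self-contained but the mechanism is identical.
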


\begin{proof}
  If $u'$ is another irreducible factor of the denominator of~$y$ and
  $\spread(u,u')$ is nonempty, then we have $\spread(u,u')\subseteq\vec c+W$ for some $\vec c\in\set Z^r$.
  This follows from Theorem~\ref{thm:1}.\ref{thm:1.2} and the assumption $\spread(u)\subseteq W$.
  For the full denominator~$d$ of~$y$, we can thus find $\vec c_1,\dots,\vec c_m\in\set Z^r$
  with $\spread(u,d)\subseteq\bigcup_{k=1}^m(\vec c_k+W)$ where each element
  from $C$ is necessary.
  Let $C=\{\vec c_1,\dots,\vec c_m\}$ be such a choice, and let $i\in\{1,\dots,m\}$ be such that
  $\vec c_i\cdot\vec v$ is minimal.

  We have
  \[
    a_{\vec p}N^{\vec p}y=f-\sum_{\vec s\in S\setminus\{\vec p\}}a_{\vec s}N^{\vec s}y
  \]
  as an identity in $\set K(\vec n)$. Therefore, every factor in the denominator of $N^{\vec p}y$
  must either be canceled by $a_{\vec p}$ or it also occurs as a factor in at least
  one of the $N^{\vec s}y$ ($\vec s\in S\setminus\{\vec p\}$).
  The factor $N^{\vec p+\vec c_i}u$ appears in the denominator of~$N^{\vec p}y$.
  If it also appeared in the denominator of $N^{\vec s}y$ for some $\vec s\in S\setminus\{\vec p\}$,
  then this would imply $N^{\vec p+\vec c_i}u=N^{\vec s+\vec c_j}u$ for some $j\in\{1,\dots,m\}$.
  But then $\vec s-\vec p + \vec c_j-\vec c_i\in W$, which is in contradiction to
  \begin{alignat*}1
    &(\vec s-\vec p + \vec c_j-\vec c_i)\cdot\vec v
    =\underbrace{(\vec s-\vec p)\cdot\vec v}_{>0} + \underbrace{(\vec c_j-\vec c_i)\cdot\vec v}_{\geq0}
     \neq0
  \end{alignat*}
  because $W$ is orthogonal to $\vec v$ by assumption.
  Hence $N^{\vec p+\vec c_i}u$ cannot appear as a denominator on the right hand side,
  and hence it must be canceled on the left hand side.
  This forces $N^{\vec p+\vec c_i}u\mid a_{\vec p}$, so the claim is proven for
  $\vec i:=\vec p+\vec c_i$. \qed
\end{proof}

The lemma tells us for which choices of $W\subseteq\set Z^r$ something nontrivial may happen.
Let us illustrate this with an example.

\begin{example}\label{ex:1}
  The equation
  \begin{alignat*}1
    &(4 k-2 n+1) (k+n+1) y(n,k)\\
    &\quad+(8 k^2+2 k n+k+6 n^2+13 n+6) y(n,k+1)\\
    &\qquad-2 (6 k^2+2 k n+13 k+2 n^2+n+6) y(n+1,k)=0
  \end{alignat*}
  has the solution $y=(n^2+2k^2)/(k+n+1)$. Its denominator is periodic, $\spread(k+n+1)=\binom{1}{-1}\set Z$.
  Lemma~\ref{lem:2} predicts the appearance of $k+n+1$ (or at least some shifted version of it)
  in the coefficient of~$y(n,k)$, because for the choice $W=\binom{1}{-1}\set Z$, the point
  $\vec p=\binom 00\in S$ admits the choice $\vec v=\binom11$ in accordance with the requirements imposed
  by the lemma. Note that no shift equivalent copy of $k+n+1$ occurs in the coefficients of $y(n,k+1)$
  or $y(n+1,k)$, which does not contradict the lemma, because the points $\binom01$ and $\binom10$ lie on
  a line parallel to~$W$. This has the consequence that for these points, there does not exist
  a vector $\vec v$ with the required property.

  Conversely, the factor $4k-2n+1$ cannot possibly appear in the denominator of a solution,
  because for $W':=\spread(4k-2n+1)=\binom12\set Z$ we can take $\vec p'=\binom10$ and
  $\vec v'=\binom{-1}{1/2}$, and according to the lemma, some shifted version of $4k-2n+1$
  would have to appear in the coefficient of $y(n+1,k)$.

  More generally, for any nontrivial submodule $W''$ of $\set Z^2$ other than~$W$, Lemma~\ref{lem:2}
  excludes the possibility of periodic factors whose spread is contained in~$W''$, because such
  factors would have to leave a trace in at least one of the coefficients of the equation.

  \medskip
  \centerline{\epsfig{file=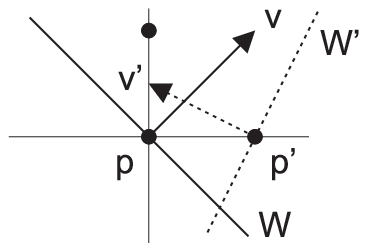}}

\end{example}

In the previous example, we could thus determine all the interesting modules $W$ by just looking
at the spreads of the the factors of the coefficients of the equation. The following example
indicates that this is not always sufficient.

\begin{example}\label{ex:2}
  The equation
  \begin{alignat*}1
    &(2 k-3 n^2-8 n-5) y(n,k+1)\\
    &\quad+(k+3 n^2+5 n+4) y(n+1,k)\\
    &\qquad-(5 k-3n^2-11 n-7) y(n+1,k+1)\\
    &\quad\qquad+(2 k-3 n^2-8 n-3) y(n+2,k)=0
  \end{alignat*}
  also has the solution $y=(n^2+2k^2)/(k+n+1)$. Its denominator $k+n+1$ does not appear in
  any of the coefficients of the equation. This is because for its spread $W=\binom{1}{-1}\set Z$
  there are no suitable $\vec p\in S$ and $\vec v\in\set R^2$ matching the conditions of the
  lemma because the points $\binom10,\binom01$ as well as the points $\binom20,\binom11$ lie
  on a line parallel to~$S$.
\end{example}

In summary, in order for $W$ to be the spread of a factor that can appear in the denominator
of a solution of~\eqref{eq:main}, $W$~must be contained in the spread of some coefficient
of the equation (as in Ex.~\ref{ex:1}) or it must be parallel to one of the faces in the
convex hull of the support~$S$ (as in Ex.~\ref{ex:2}).
For every equation, we can thus determine some finitely many submodules of $\set Z^r$ of
codimension one such that each possibly occuring spread $W$ is contained in at least one
of them.

\subsection{A Normalizing Change of Variables}\label{sec:normal}

Let $\set Z^r=V\oplus W$ be a decomposition of~$\set Z^r$ into submodules.
Our goal is to obtain denominator bounds with respect to~$W$ by applying the
algorithm from last year~\cite{kauers10b} to $V\cong\set Z^r/W$.
It turns out that this can be done provided
that $W$ is sufficiently nondegenerate. In order to formulate the precise
conditions on~$W$ without too much notational overhead, it seems convenient
to make a change of coordinates.

Let invertible matrices $A=((a_{i,j}))_{i,j=1}^r\in\set Q^{r\times r}$
act on $\K(\vec n)$ via
\begin{alignat*}3
  A\cdot y(n_1,\dots,n_r)
  &:= y\bigl(&&a_{1,1}n_1+a_{1,2}n_2+\cdots+a_{1,r}n_r,\\
  &          &&a_{2,1}n_1+a_{2,2}n_2+\cdots+a_{2,r}n_r,\\
  &          &&\vdots\\
  &          &&a_{r,1}n_1+a_{r,2}n_2+\cdots+a_{r,r}n_r\bigr).
\end{alignat*}
We obviously have $A\cdot(p+q)=(A\cdot p)+(A\cdot q)$ and
$A\cdot(pq)=(A\cdot p)(A\cdot q)$ for all $p,q\in\K(\vec n)$.
It can be checked that we also have
\[
  A\cdot(N^{\vec s}y)=N^{A^{-1}\vec s}(A\cdot y)
\]
for every $A\in\set Z^{r\times r}$ with $|\det A|=1$ and
every $\vec s\in\set Z^r$ and every $y\in\set K(\vec n)$.
It follows that $y\in\K(\vec n)$ is a solution of \eqref{eq:main}
if and only if $\tilde y=A^{-1}\cdot y$ is a solution of the transformed equation
\[
  \sum_{\vec s\in S}(A\cdot a_{\vec s})N^{A^{-1}\vec s}\tilde y=A\cdot f,
\]
or equivalently of
\[
  \sum_{\vec s\in\tilde S}\tilde a_{\vec s}N^{\vec s}\tilde y=\tilde f,
\]
where $\tilde S=\{A^{-1}\vec s:s\in S\}$, $\tilde a_{\vec s}:=A\cdot a_{A\vec s}$
($\vec s\in\tilde S$), and $\tilde f=A\cdot f$.

Now take $A\in\set Z^r$ with $|\det A|=1$ such that the first $t$ rows of $A$
form a basis of $V$ and the last $r-t$ rows of $A$ form
a basis of~$W$. Then the transformation just described maps
the basis vectors of $V$ to the first $t$ unit vectors and the basis
vectors of $W$ to the last $r-t$ unit vectors. In other words, we can
assume without loss of generality that $V$ itself is generated by the
first $t$ unit vectors and $W$ by the last $r-t$ unit vectors in~$\set Z^r$.
We will make this assumption from now on, unless otherwise stated.
Note that this convention implies for an irreducible polynomial $u\in\K[\vec n]$
that $\spread(u)=W$ is equivalent to $u$ being free of the variables
$n_{t+1},n_{t+2},\dots,n_r$ and aperiodic as element of $\K[n_1,\dots,n_t]$.

By applying, if necessary, a suitable power of $N_1$ on both sides of the
equation we can further assume without loss of generality that
$\min\{s_1:(s_1,\dots,s_r)\in S\}=0$, and we set
$k:=\max\{s_1:(s_1,\dots,s_r)\in S\}$.
\subsection{Bounding the Dispersion}

With this transformation w.r.t.\ the submodule $W$ of $\set Z^r$ and under the assumption that the extreme points~\eqref{Equ:ABSet} of $S$ have certain properties, Theorem~\ref{thm:disp} explains how one can bound the dispersion along the $x_1$-coordinate of all factors $f$ with $\spread(f)\in W$ that occur in the denominator of a solution of~\eqref{eq:main}. This result is a refinement of Lemma~2 from~\cite{kauers10b}.

\begin{lemma}\label{Lemma:UniqueTComponents}
Let $u,v\in\set K[\vec n]\setminus\{0\}$ with $\spread(u)\subseteq W$ and $\spread(v)\subseteq  W$. Then
\begin{multline*}
|\{(s_1,\dots,s_t)\in\set Z^t\mid\exists (s_{t+1},\dots,s_r)\in\set Z^{r-t}:\\
N^{(s_1,\dots,s_r)}u=v\}|\leq 1.
\end{multline*}
\end{lemma}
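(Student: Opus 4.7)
The plan is to argue by contradiction: if two distinct tuples $(s_1,\dots,s_t)$ and $(s_1',\dots,s_t')$ belonged to the set, I would show that their difference must lie in $\spread(u)$, hence in~$W$, and then invoke the normalizing convention of Section~\ref{sec:normal} to reach a contradiction. Most of the real work is already encapsulated in the change of coordinates that makes $W$ coincide with the last $r-t$ coordinate axes; what remains is short and essentially bookkeeping.

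Concretely, I would pick any two tuples $(s_1,\dots,s_t)$ and $(s_1',\dots,s_t')$ in the set and let $\vec s,\vec s'\in\set Z^r$ be corresponding extensions, so that $N^{\vec s}u=v=N^{\vec s'}u$. Since the shift operators act invertibly on $\K(\vec n)$, applying $N^{-\vec s}$ to both sides yields $u=N^{\vec s'-\vec s}u$. Writing $\vec d:=\vec s'-\vec s$, this forces $\gcd(u,N^{\vec d}u)=u$, which is nontrivial provided $u$ is nonconstant (the constant case is degenerate and may safely be excluded here, since the lemma is only applied to nonconstant factors of denominators). Hence $\vec d\in\spread(u)\subseteq W$.

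To conclude, I would recall from Section~\ref{sec:normal} that after the normalizing change of variables we may assume $W$ is generated by the last $r-t$ unit vectors, so that every element of~$W$ has its first $t$ coordinates equal to zero. In particular the first $t$ entries of $\vec d$ vanish, i.e.\ $(s_1,\dots,s_t)=(s_1',\dots,s_t')$, and the set therefore contains at most one element. I do not anticipate any substantial obstacle: the argument is essentially a one-liner once the decomposition $\set Z^r=V\oplus W$ has been put into standard form. Note that the hypothesis $\spread(v)\subseteq W$ is not actually used in this derivation; a dual argument applied to $v$ would yield the same conclusion, so the symmetric hypothesis is presumably stated only for convenience in the later applications.
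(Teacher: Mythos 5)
Your proposal matches the paper's own proof: in both cases the difference $\vec s'-\vec s$ is shown to satisfy $N^{\vec s'-\vec s}u=u$, hence lies in $\spread(u)\subseteq W=\{0\}^t\times\set Z^{r-t}$, forcing the first $t$ coordinates to agree. Your explicit caveat about constant $u$ (which the paper leaves implicit, since the lemma is only ever applied to irreducible denominator factors) and your observation that the hypothesis $\spread(v)\subseteq W$ is superfluous are both correct but do not change the substance of the argument.
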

\begin{proof}
Take $\vec s,\vec s'$ with $N^{\vec s}u=v=N^{\vec s'}u$. As $N^{\vec s-\vec s'}u=u$, it follows $\vec{s}-\vec{s}'\in W=\{0\}^t\times\set Z^{r-t}$, and thus the first $t$ components of $\vec s,\vec s$ agree.
\end{proof}

\begin{theorem}\label{thm:disp}
  Let
  \begin{equation}\label{Equ:ABSet}
  \begin{split}
    A&=\{(s_1,\dots,s_r)\in S: s_1=0\},\\
    B&=\{(s_1,\dots,s_r)\in S: s_1=k\}.
    \end{split}
  \end{equation}
  Suppose that no two elements of $A$ agree in the first $t$ coordinates,
  and that the same is true for~$B$. Let $a'_{\vec i}$ be those polynomials which contain all irreducible factors $f$ of $a_{\vec i}$ with $\spread(f)\subseteq W$. Let
  \[
    s:=\max\{\disp_1(a'_{\vec s},N_1^{-k}a'_{\vec t}): \vec s\in A\text{ and }\vec t\in B\}.
  \]
  Then for any solution $y=p/q\in\K(\vec n)$ of $\eqref{eq:main}$ and any irreducible
  factors $u,v$ of $q$ with $\spread(u),\spread(v)\subseteq W$ we have $\disp_1(u,v)\leq s$.
\end{theorem}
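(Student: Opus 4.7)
My plan is a multivariate refinement of Abramov's classical extreme-factor argument: within a shift-orbit of irreducible factors of the denominator $q$ of a solution $y=p/q$ I will locate extremal representatives on the ``left'' (minimising the first coordinate of the shift) and on the ``right'' (maximising it), force each of them to divide $a'_{\vec p^*}$ or $a'_{\vec b^*}$ for suitable $\vec p^*\in A$ or $\vec b^*\in B$, and then read off the dispersion bound.

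Reduction to a single orbit. If $\spread(u,v)=\emptyset$ then $\disp_1(u,v)=-\infty\le s$, so assume $u,v$ are shift-equivalent. By Lemma~\ref{Lemma:UniqueTComponents}, for any element $w$ of the orbit the first $t$ coordinates of the shift from a fixed base point $u_0$ to $w$ are uniquely determined; denote them by $\vec j(w)\in\set Z^t$. A direct calculation using $\spread(u_0)\subseteq W=\{0\}^t\times\set Z^{r-t}$ then gives $\disp_1(u,v)=|j(u)(1)-j(v)(1)|$. Let $D_1\subseteq\set Z$ be the set of values $j(w)(1)$ as $w$ ranges over the irreducible factors of $q$ in the orbit, and let $D_1^{\min},D_1^{\max}$ be its extrema. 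It suffices to prove $D_1^{\max}-D_1^{\min}\le s$.

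Left endpoint. Fix a generic $\vec v\in\set R^t$ so that $\vec p^*\in A$ is the unique argmax of $\vec p\mapsto\vec p_{1..t}\cdot\vec v$ over $A$ (unique because the first-$t$-projection of $A$ is injective by hypothesis) and $\vec j^*$ is the unique argmax of $\vec j\cdot\vec v$ over the leftmost layer $L:=\{\vec j(w):w\mid q\text{ in the orbit},\ j(w)(1)=D_1^{\min}\}$. Pick a corresponding factor $w^*\mid q$ with $\vec j(w^*)=\vec j^*$ and set $h:=N^{\vec p^*}w^*$. Suppose $h\mid N^{\vec s}q$ for some $\vec s\in S\setminus\{\vec p^*\}$. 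Then the orbit element $N^{\vec p^*-\vec s}w^*$ divides $q$ with first-$t$-index $\vec j^*+\vec p^*_{1..t}-\vec s_{1..t}$, whose first coordinate is $D_1^{\min}-s_1$; since this must lie in $D_1$, we get $s_1\le 0$ and hence $\vec s\in A$. But then $\vec j^*+\vec p^*_{1..t}=\vec j'+\vec s_{1..t}$ with $\vec j'\in L$ exhibits a second decomposition of the unique $\vec v$-extremizer of $L+A_{1..t}$, forcing $\vec s=\vec p^*$: contradiction. Taking $h$-adic valuations of both sides of~\eqref{eq:main}, every term $a_{\vec s}N^{\vec s}y$ with $\vec s\ne\vec p^*$ has $v_h\ge 0$, while the $\vec p^*$-term has $v_h=v_h(a_{\vec p^*})-v_h(N^{\vec p^*}q)$; unless $h\mid a_{\vec p^*}$, this is the unique strictly negative term, so the unique-minimum principle gives $v_h(f)<0$, contradicting $f\in\K[\vec n]$. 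Hence $h\mid a_{\vec p^*}$, and since $\spread(h)\subseteq W$, we get $h\mid a'_{\vec p^*}$.

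The symmetric right-endpoint argument, with $B$ in place of $A$, $D_1^{\max}$ in place of $D_1^{\min}$, and an analogous generic functional, yields $\vec b^*\in B$ and a factor $h'\mid a'_{\vec b^*}$ in the same orbit with $j(h')(1)=D_1^{\max}+k$. Then $N_1^{-k}h'\mid N_1^{-k}a'_{\vec b^*}$ lies in the same orbit as $h$ and differs from it by a shift whose first coordinate equals $D_1^{\max}-D_1^{\min}$, so
\[
\disp_1(a'_{\vec p^*},N_1^{-k}a'_{\vec b^*})\ge D_1^{\max}-D_1^{\min},
\]
and by the definition of $s$ this gives $D_1^{\max}-D_1^{\min}\le s$, completing the proof. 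The delicate step is the uniqueness of the extremal decomposition in the left/right endpoint arguments: in the multivariate setting a single shifted factor $h$ could a priori appear in the denominators of several $N^{\vec s}y$ with $\vec s\in A$ and cancel among them rather than divide $a_{\vec p^*}$. The injectivity hypothesis on the first-$t$-projection of $A$ together with the generic linear functional is exactly what rules this out, and Example~\ref{ex:2} already hints that without such conditions the conclusion can genuinely fail.
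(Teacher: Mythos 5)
Your proof is correct, and it follows the same high-level strategy as the paper (locate extremal shifted factors in the denominator, shift them by extremal corner points of the layers $A$ and $B$, and argue that the resulting irreducible factor can only occur in one term of the equation and hence must divide the corresponding coefficient), but the bookkeeping is genuinely different and arguably cleaner. Three specific departures: (i) the paper orders shift-indices lexicographically and proves the needed uniqueness by explicit lex comparisons, whereas you use a single generic linear functional $\vec v\in\set R^t$ which simultaneously makes the argmax over $A_{1..t}$, over the extreme layer $L$, and over the sumset $L+A_{1..t}$ unique --- this makes the ``no other $N^{\vec s}q$ contains $h$'' step transparent; (ii) you give a direct bound $D_1^{\max}-D_1^{\min}\le s$, deducing $h\mid a'_{\vec p^*}$ and $h'\mid a'_{\vec b^*}$ independently by symmetric left/right endpoint arguments, while the paper argues by contradiction (assume the dispersion $d>s$ is maximal) and splits into two cases depending on whether the lex-minimal extremal factor $u'$ already divides some $a_{\vec s}$ with $\vec s\in A$; your symmetric choice of extremal factor avoids this case split entirely; (iii) you phrase the cancellation step via $h$-adic valuations and the ultrametric inequality, which is the same mechanism as the paper's rewriting of the equation to isolate one term $N^{\vec w}y$ but is a bit more compact. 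One small point worth making explicit in a final write-up: the reduction $\disp_1(u,v)=|j(u)(1)-j(v)(1)|$ uses that $\spread(u,v)$ is the coset $(\vec j(u)-\vec j(v))+W$ (Theorem~\ref{thm:1}.\ref{thm:1.2}), and the argument implicitly uses $\gcd(p,q)=1$ so that $v_h(N^{\vec p^*}p)=0$; both are standard but should be stated.
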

\begin{proof}
As $S$ is not empty, $A,B$~are nonempty. W.l.o.g.\ we may assume that the minimal element of $A$ w.r.t.\ lexicographic order is the zero vector.\\
Suppose that there are irreducible factors $u,v$ of $q$ with $\spread(u)\subseteq W$, $\spread(v)\subseteq W$ and $d:=\disp_1(u,v)$ such that $d>s$; take such $u,v$ such that $d$ is maximal.
Consider all the factors $N^{\vec u}u$ and $N^{\vec v}v$ occurring in $q$ where the first entry in $\vec u$ and $\vec v$ is $0$. Note that by Lemma~\ref{Lemma:UniqueTComponents} there are only finitely many choices of the first $t$ components, so we can choose two such factors from $q$ where the first $t$ components of $\vec u$ are minimal and the first $t$ components of $\vec v$ are maximal w.r.t.\ lexicographic order; these factors are denoted by $u',v'$ respectively.\\
$\bullet$ First suppose that $u'$ divides one of the polynomials $a_{\vec{s}}$ with $\vec{s}\in A$.
In this case we choose the polynomial $a_{\vec w}$ with $\vec{w}=(w_1,\dots,w_r)\in B$ such that $(w_2,\dots,w_{t})$ is maximal w.r.t.\ lexicographic order (uniqueness is guaranteed by the assumption that no two elements from $B$ agree in the first $t$ components). We can write~\eqref{eq:main} in the form
\begin{equation}\label{Equ:RewritePLDE}
N^{\vec w}y=\frac{1}{a_{\vec{w}}}\Big(f-\sum_{\vec{s}\in S\setminus\{\vec{w}\}}a_{\vec{s}}N^{\vec{s}}y\Big).
\end{equation}
Now observe that the factor $N^{\vec w}v'$ does not occur in the denominator of any $N^{\vec s}y$ with $\vec s\in S\setminus\{\vec{w}\}$:
\begin{enumerate}
\item Suppose that there is $\vec{s}\in S\setminus B$ such that $N^{\vec w}v'$ occurs in $N^{\vec s}q$, i.e., $N^{\vec w-\vec s}v'$ is a factor of $q$. Since the first component of $\vec{w}$ is $k$ ($\vec w\in B$) and the first component of $\vec{s}$ is smaller than $k$ ($\vec{s}\notin B$), the first component of $\vec w-\vec s$ is positive. Moreover, since the distance between the factors $v'$ and $u'$ of $q$ is $d$ in the first component, the factors $v'$ and $N^{\vec w-\vec s}v'$ of $q$ have distance larger than $d$ in the first component; a contradiction that the distance $d$ is maximally chosen. Consequently, if $N^{\vec w}v'$ is a factor
    in the denominator of $N^{\vec s}y$ with $\vec{s}\in S$, it follows that $\vec{s}\in B$.
\item Suppose that there is $\vec{s}\in B$ with $\vec w\neq\vec{s}$ such that $N^{\vec w}v'$ is a factor of $N^{\vec s}q$. Then $N^{\vec w-\vec s}v'$ is a factor of $q$. Since the first component of the vectors in $B$ is $k$, but the first $t$ components in total cannot be the same for two different vectors of $B$, it follows that the first entry in $\vec w-\vec s$ is zero and at least one of the others is non-zero; in particular, by the maximality assumption on $\vec{w}$ the first non-zero entry is positive. Hence we find $\vec v'=(0,v'_2,\dots,v'_r):=\vec v+\vec w-\vec s$ such that $N^{\vec v'}v$ is a factor of $q$ and such that $(v'_2,\dots,v'_t)$ is larger than $(v_2,\dots,v_t)$ w.r.t.\ lexicographic ordering; a contradiction to the choice of the vector $\vec v$.
\end{enumerate}
Since $f,a_{\vec{s}}\in\set K[\vec{n}]$, the common denominator of the rational function
$f-\sum_{\vec{s}\in S\setminus\{\vec{w}\}}a_{\vec{s}}N^{\vec{s}}y$ does not contain the factor  $N^{\vec w}v'$.  Now suppose that $N^{\vec w}v'$ is a factor of $a_{\vec{w}}$. Since $\vec{w}\in B$,
its first component is $k$. But then, since $u'$ and $v'$ have distance $d$ in the first coordinate, also the factors $u'$ and $N_1^{-k}N^{\vec{w}}v'$ have distance $d$. Thus
$\disp_1(a_{\vec{s}},N_1^{-k} a_{\vec w})\geq d$ which implies that $s\geq d$; a contradiction. Overall, the common denominator on the right hand side of~\eqref{Equ:RewritePLDE} cannot contain the factor $N^{\vec w}v'$ which implies that the denominator of $N^{\vec w}y$ is not divisible by $N^{\vec w}v'$. Thus the denominator of $y$, in particular $q$ is not divisible by $v'$; a contradiction.\\
$\bullet$ Conversely, suppose that $u'$ does not divide any of the polynomials $a_\vec{s}$ with $\vec{s}\in A$. Now let $\vec w=(0,w_2\dots,w_r)\in A$ such that $(w_2,\dots,w_t)$ is minimal w.r.t.\ lexicographic ordering (again it is uniquely determined by the assumptions on $A$), and write~\eqref{eq:main} in the form~\eqref{Equ:RewritePLDE}; by our assumption stated in the beginning, $\vec w$ is just the zero vector $\vec{0}$. By analogous arguments as above (the roles of $A$ and $B$ are exchanged) it follows that $u'$ does not occur in the denominator of any $N^{\vec s}y$ with $\vec s\in S\setminus\{\vec{0}\}$. Hence as above, the common denominator of
$f-\sum_{\vec{s}\in S\setminus\{\vec{0}\}}a_{\vec{s}}N^{\vec{s}}y$ does not contain the factor  $u'$. Moreover, since $u'$ does not divide any $a_{\vec{s}}$ from $\vec{s}\in A$, the factor $u'$ does not occur in $a_{\vec{0}}$. In total, the factor $u'$ is not part of the denominator on the right hand side of~\eqref{Equ:RewritePLDE}, but it is a factor of the denominator on the left hand side; a
contradiction.
\end{proof}

If the required properties on the sets~\eqref{Equ:ABSet} in Theorem~\ref{thm:disp} are violated, our bounding strategy does not work, as can be seen by the following example.

\begin{example}
Fix $W:=\spread(k+n+1)=\binom{1}{-1}\set Z$ and take
$V=\binom{0}{1}\set Z$. The problem from Example~\ref{ex:1} is normalized
by the change of variables $n\to k$ and $k\to n-k$ (i.e., a basis transformation \tiny$\left(\begin{matrix}
0&1\\
1&-1
\end{matrix}\right)$\normalsize with determinant $-1$ is chosen) and one obtains $V'=\binom{1}{0}\set Z$ and $W'=\binom{0}{1}\set Z$. This gives the new equation
  \begin{alignat*}1
    &(n+1) (-6 k+4 n+1) y(n,k)\\
    &\quad+(12 k^2-14 n k+12
   k+8 n^2+n+6)y(n+1,k)\\
    &-2(6 k^2-10 n k-12 k+6n^2+13 n+6)y(n+1,k+1)=0
  \end{alignat*}
with the new structure set $S'=\{\binom{0}{0},\binom{1}{0},\binom{1}{1}\}$
which now has the solution $y=\frac{3 k^2-4 n k+2 n^2}{n+1}$ where the denominator consists of the factor $n+1$ with $\spread(n+1)=W'$. As observed already in Example~\ref{ex:1} one can predict the factor $n+1$ (up to a shift in $n$) by exploiting Lemma~\ref{lem:2}. However, one cannot apply Theorem~\ref{thm:disp}. For $S'$ we get the sets $A=\{\binom{0}{0}\}$ and $B=\{\binom{1}{0},\binom{1}{1}\}$ where in $B$ the two vectors are the same in the first component but differ in the second component.
\end{example}

\subsection{Denominator Bounding Theorem}

The denominator bounding theorem says that if we rewrite the equation \eqref{eq:main} into
a new equation whose support contains some point $\vec p$ which is sufficiently far away from
all the other points in the support, then we can read off a denominator bound from this
new equation. We will need the following fact, which appears literally as Theorem~3 in~\cite{kauers10b}
(with $W,S'$ renamed to $R^-,R^+$ here in order to avoid a name clash with the meaning of $W$
in the present paper).

\begin{lemma}\label{lemma:1}
  Let $\vec p$ be a corner point of $S$ with border plane $H$ and inner vector~$\vec v$.
  Then for every $s>0$ there exist finite sets
  \begin{alignat*}1
    &R^-\subseteq\set Z^r\cap\bigcup_{0\leq e\leq s}(H+e\vec v)\text{ and}\\
    &R^+\subseteq\set Z^r\cap\bigcup_{e>s}(H+e\vec v),
  \end{alignat*}
  and polynomials $b,b_{\vec i}\in\set K[\vec n]$ such that for any solution $y\in\set K(\vec n)$
  of \eqref{eq:main} we have
  \begin{equation}\label{eq:reduced}
    N^{\vec p}y=\frac{b+\sum_{\vec i\in R^+}b_{\vec i}N^{\vec i}y}
                    {\prod_{\vec i\in R^-}N^{\vec i-\vec p} a_{\vec p}}.
  \end{equation}
\end{lemma}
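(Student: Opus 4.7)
The plan is to prove the lemma by iteratively eliminating $y$-shifts that still lie in the slab $0<(\vec i-\vec p)\cdot\vec v\leq s$, at the cost of introducing shifts of $a_{\vec p}$ in the denominator. The base identity, obtained by isolating the term at $\vec p$ in \eqref{eq:main}, reads
\[
  N^{\vec p}y=\frac{1}{a_{\vec p}}\Bigl(f-\sum_{\vec s\in S\setminus\{\vec p\}}a_{\vec s}N^{\vec s}y\Bigr).
\]
Because $\vec p$ is a corner point with inner vector $\vec v$, every $\vec s\in S\setminus\{\vec p\}$ satisfies $(\vec s-\vec p)\cdot\vec v>0$; write $\epsilon$ for the positive minimum of these quantities.

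Next, I would loop the following step: whenever the current expression for $N^{\vec p}y$ contains a term $N^{\vec i}y$ with $0<(\vec i-\vec p)\cdot\vec v\leq s$, apply the shift $N^{\vec i-\vec p}$ to the base identity in order to rewrite $N^{\vec i}y$ as a $\K[\vec n]$-linear combination of the terms $N^{\vec i+\vec s-\vec p}y$ for $\vec s\in S\setminus\{\vec p\}$, divided by $N^{\vec i-\vec p}a_{\vec p}$. Each such step records $\vec i$ in $R^-$ and replaces the offending term by terms strictly farther from~$H$, each at level at least $(\vec i-\vec p)\cdot\vec v+\epsilon$. Hence the minimum level of any remaining $y$-shift still inside the slab is monotone non-decreasing and jumps up by at least $\epsilon$ each time it changes, so the loop terminates after finitely many rounds with finite sets $R^-\subseteq\set Z^r\cap\bigcup_{0\leq e\leq s}(H+e\vec v)$ and $R^+\subseteq\set Z^r\cap\bigcup_{e>s}(H+e\vec v)$ of the required shapes (note that $\vec p$ itself is placed into $R^-$ by the very first application, which is consistent with the lower bound $e\geq 0$).

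After termination, I would clear the common denominator $\prod_{\vec i\in R^-}N^{\vec i-\vec p}a_{\vec p}$ to bring the expression into the form~\eqref{eq:reduced}; the numerator is a polynomial plus a $\K[\vec n]$-linear combination of $N^{\vec i}y$ with $\vec i\in R^+$ because $f$ and all $a_{\vec s}$ lie in $\K[\vec n]$ and only shifts of $a_{\vec p}$ ever enter the denominator. The main obstacle is the bookkeeping: one has to verify that the iterated substitutions can be combined over a single common denominator without creating spurious poles, and that the elimination terminates. Termination is the substantive point, and follows from the uniform positive lower bound~$\epsilon$ on progress per step, which in turn is a direct consequence of the corner-point property of~$\vec p$.
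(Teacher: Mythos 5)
The paper does not re-prove this lemma; it states it verbatim as Theorem~3 of~\cite{kauers10b} and refers to Algorithm~2 there for the construction of $R^-,R^+,b,b_{\vec i}$. Your iterative-elimination argument is the natural reconstruction and, modulo one normalization quibble (your ``level'' $(\vec i-\vec p)\cdot\vec v$ equals $e\|\vec v\|^2$, not $e$, so the slab threshold should be $s\|\vec v\|^2$ unless $\vec v$ is a unit vector), it is sound. One point worth making explicit is the finiteness of $R^-$: the slab $\bigcup_{0\le e\le s}(H+e\vec v)$ is unbounded in directions orthogonal to~$\vec v$, so finiteness does not come from the slab being bounded. Rather, every $\vec i$ that is ever processed has the form $\vec p+\sum_j k_j(\vec s_j-\vec p)$ with $\vec s_j\in S\setminus\{\vec p\}$ and $k_j\in\mathbb{N}$, and since each $(\vec s_j-\vec p)\cdot\vec v\ge\epsilon>0$ while the total level is bounded, one has $\sum_j k_j\le s\|\vec v\|^2/\epsilon$; as $S$ is finite, this leaves only finitely many reachable $\vec i$. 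Your monotone-level termination argument is correct, but it is cleaner to note this a priori bound, which simultaneously gives termination and the finiteness of both $R^-$ and $R^+$.
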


The sets $R^-$ and $R^+$ and the polynomials $b,b_{\vec i}$ can be computed for a
given $s$, $S$, $\vec p$, and~$\vec v$ by Algorithm~2 from~\cite{kauers10b}. The next theorem provides
a denominator bound with respect to~$W$. It is an adaption of Theorem~4 from~\cite{kauers10b}
to the present situation. We continue to assume the normalization $V=\set Z^t\times\{0\}^{r-t}$,
$W=\{0\}^t\times\set Z^{r-t}$.

\begin{theorem}\label{thm:db}
  Let $s\in\set N\cup\{-\infty\}$ be such that for any solution $y=p/q\in\set K(\vec n)$
  of~\eqref{eq:main} and any irreducible factors $u,v$ of $q$ with $\spread(u),\spread(v)\subseteq W$
  we have $\disp_1(u,v)\leq s$.
  Let $\vec p$ be a corner point of $S$ for which there is an inner vector $\vec v=(v_1,\dots,v_r)$
  with $v_1\geq1$ as well as an inner vector $\vec v'$ orthogonal to~$W$.
  For these choices of $s$, $\vec p$, and~$\vec v$, let $R^-$, $R^+$, $b$,~$b_{\vec i}$ be as in
  Lemma~\ref{lemma:1}.
  Let $a'_{\vec p}$ be the polynomial consisting of all the factors of $a_{\vec p}$
  whose spread is contained in~$W$.
  Then
  \begin{equation}\label{Equ:DenBoundPlance}
    d:=\prod_{\vec s\in R^-}N^{\vec s-2\vec p}a'_{\vec p}
  \end{equation}
  is a denominator bound of~\eqref{eq:main} with respect to~$W$.
\end{theorem}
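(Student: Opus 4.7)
The plan is to adapt the strategy behind Theorem~4 of~\cite{kauers10b} (the aperiodic analogue of the present statement), replacing ``aperiodic'' by ``$\spread\subseteq W$'' and letting the two inner vectors $\vec v,\vec v'$ together with the dispersion bound $s$ play the role that aperiodicity alone used to play.

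Fix a solution $y=p/q$ of~\eqref{eq:main} and an arbitrary irreducible factor $u$ of~$q$ with $\spread(u)\subseteq W$; the goal is to prove $u\mid d$. First I would invoke Lemma~\ref{lem:2} with the inner vector $\vec v'$: since $\vec v'$ is orthogonal to~$W$, the lemma produces some $\vec j\in\set Z^r$ with $N^{\vec j}u\mid a_{\vec p}$, and because $\spread(N^{\vec j}u)=\spread(u)\subseteq W$ this in fact refines to $N^{\vec j}u\mid a'_{\vec p}$. Thus every irreducible factor of~$q$ with spread in~$W$ already occurs, up to a shift, in~$a'_{\vec p}$, which is precisely the reason why the candidate bound~\eqref{Equ:DenBoundPlance} is built from shifted copies of~$a'_{\vec p}$.

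Next I would apply Lemma~\ref{lemma:1} to $\vec p$, $\vec v$ and the value~$s$ from the hypothesis, obtaining the rewriting~\eqref{eq:reduced}. Clearing denominators and comparing, the irreducible factor $N^{\vec p}u$ of the denominator of $N^{\vec p}y$ must either (a)~divide the explicit product $\prod_{\vec i\in R^-}N^{\vec i-\vec p}a_{\vec p}$ on the right, or (b)~appear in the denominator $N^{\vec i}q$ of some term $N^{\vec i}y$ on the right with $\vec i\in R^+$. Case~(a) is straightforward: it yields $\vec i\in R^-$ with $N^{2\vec p-\vec i}u\mid a_{\vec p}$, and since $\spread(u)\subseteq W$ this restricts to $N^{2\vec p-\vec i}u\mid a'_{\vec p}$, whence $u\mid N^{\vec i-2\vec p}a'_{\vec p}\mid d$, as required.

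The main obstacle is ruling out case~(b). Here $u':=N^{\vec p-\vec i}u$ is a second irreducible factor of $q$ with $\spread(u')=\spread(u)\subseteq W$, so the hypothesis on~$s$ gives $\disp_1(u,u')\leq s$. Using the normalization $W=\{0\}^t\times\set Z^{r-t}$ from Section~\ref{sec:normal}, one checks that $\spread(u,u')=(\vec i-\vec p)+\spread(u)$, whose elements all share first coordinate $(\vec i-\vec p)_1$, so that $\disp_1(u,u')=|(\vec i-\vec p)_1|\leq s$. On the other hand, $\vec i\in R^+$ places $\vec i$ in the layers $H+e\vec v$ with $e>s$, and together with $v_1\geq1$ this should force $(\vec i-\vec p)_1>s$, in contradiction with the previous bound. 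Making this last step fully rigorous---turning ``$\vec v$-depth exceeds~$s$'' into ``first coordinate exceeds~$s$''---is the delicate point of the proof; it requires inspecting how Algorithm~2 of~\cite{kauers10b} actually distributes the integer points in $R^+$ among the layers $H+e\vec v$ and using the normalization that $S$ has first coordinates in $[0,k]$ with $\vec p$ on the boundary.
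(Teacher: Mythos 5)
Your proposal reproduces the paper's proof step for step: apply Lemma~\ref{lem:2} with the $W$-orthogonal inner vector $\vec v'$ to place a shift of $u$ in $a'_{\vec p}$, invoke Lemma~\ref{lemma:1} to get the representation~\eqref{eq:reduced}, and then use the dispersion bound $s$ to rule out $N^{\vec p}u$ appearing in any $N^{\vec i}y$ with $\vec i\in R^+$, forcing it into the product over $R^-$ and hence $u\mid d$. The ``delicate step'' you flag is asserted in the paper in exactly the same brief way (``Because of $v_1\geq1$, every $\vec i\in R^+$ differs from $\vec p$ in the first coordinate by more than~$s$'') without further elaboration, so you lose nothing there; the one small thing the paper adds that you omit is carrying the multiplicity $m$ of $u$ through the argument, concluding $u^m\mid d$ rather than just $u\mid d$, which is stronger than the stated definition requires but is used later in Algorithm~\ref{algo:2}.
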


\begin{proof}
  Let $y=p/q\in\set K(\vec n)$ be a solution of~\eqref{eq:main} and let $u$ be an irreducible
  factor of $q$ with multiplicity $m$ and $\spread(u)\subseteq W$.
  We have to show $u^m\mid d$.
  Lemma~\ref{lem:2} applied to $\vec p$ and $\vec v'$ implies that there is some $\vec i\in\set Z^r$
  with $u'\mid q$ and $u':=N^{\vec i}u\mid a_{\vec p}$.
  By the choice of~$s$ we have $\disp_1(u',u)\leq s$.

  Lemma~\ref{lemma:1} implies the representation
  \[
    N^{\vec p}y=\frac{b+\sum_{\vec i\in R^+}b_{\vec i}N^{\vec i}y}
                    {\prod_{\vec i\in R^-}N^{\vec i-\vec p} a_{\vec p}}.
  \]
  Because of $v_1>1$, every $\vec i\in R^+$ differs from $\vec p$ in the first coordinate
  by more than~$s$. This implies that $N^{\vec p}u$ and hence that $N^{\vec p}u^m$ cannot appear in the denominator of
  $N^{\vec i}y$ for any $\vec i\in R^+$. But it does appear in the denominator of~$N^{\vec p}y$,
  so it must appear as well in the denominator of the right hand side.
  The only remaining possibility is thus
  $N^{\vec p}u^m\mid \prod_{\vec i\in R^-}N^{\vec i-\vec p} a_{\vec p}$,
  and hence
  \[
    u^m\mid\prod_{\vec i\in R^-}N^{\vec i-2\vec p}a_{\vec p}.
  \]
  Because of $\spread(u')=\spread(u)\subseteq W$, it follows that $u^m\mid d$.
\end{proof}

The following figure illustrates the situation. The vector $\vec v$ is orthogonal to $H$
but not necessarily to~$W$, while the vector $\vec v'$ is orthogonal to $W$ but not
necessarily to~$H$.
Relation~\eqref{eq:reduced} separates $\vec p$ from the points in $R^+$ which are all below
the plane $H+s\vec v$. The points in $R^-$ are all between $H$ and $H+s\vec v$.

\medskip
\centerline{\epsfig{file=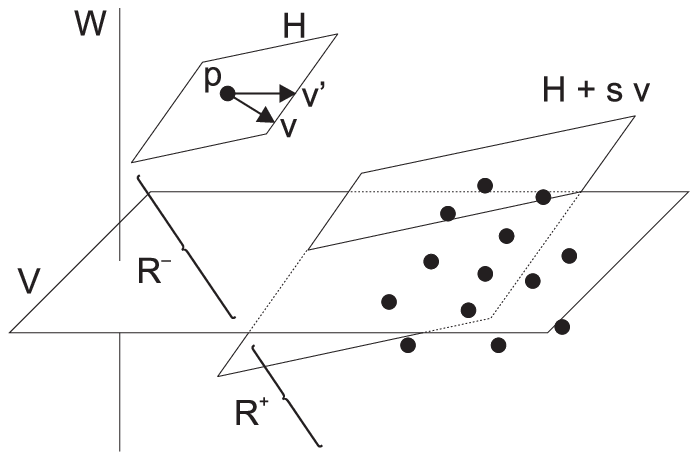}}

\medskip

\subsection{A Denominator Bounding Algorithm}

We now combine Theorems~\ref{thm:disp} and~\ref{thm:db} to an algorithm
for computing a denominator bound with respect to an arbitrary given~$W$
in situations where these theorems are applicable.

\begin{definition} Let $\vec p,\vec p'$ be corner points of~$S$
  and $W$ some submodule of~$\set Z^r$.
  \begin{enumerate}
  \item The point $\vec p$ is called \emph{useless} for $W$ if there is an edge $(\vec p,\vec s)$
    in the convex hull of $S\setminus\{\vec p\}$ with $\vec p-\vec s\in W$.
  \item The pair $(\vec p,\vec p')$ is called \emph{opposite} if there is a vector $\vec v$
    such that $(\vec s-\vec p)\cdot\vec v\geq0$ and
    $(\vec p'-\vec s)\cdot\vec v\geq0$ for all $\vec s\in S$.
    Such a $\vec v$ is called a \emph{witness vector} for the pair $(\vec p,\vec p')$.
  \item The pair $(\vec p,\vec p')$ is called \emph{useful} for $W$ if it is opposite and neither
    $\vec p$ nor $\vec p'$ is useless for~$W$.
  \end{enumerate}
\end{definition}

The definition of a useful pair is made in such a way that when a change of variables as
described in Section~\ref{sec:normal} is applied which maps a witness vector of the pair
to the first axis, then the sets $A$ and $B$ from Theorem~\ref{lem:2} are such that
$\vec p\in A$, $\vec p'\in B$ (because of the oppositeness), no two elements of $A$
agree in the first $r-\dim W$ coordinates (because $\vec p$ is not useless), and the
same is true for~$B$ (because $\vec p'$ is not useless).

Whether a pair $(\vec p,\vec p')\in S^2$ is useful or not can be found out by making an
ansatz for the coefficients of a witness vector and solving the system of linear
inequalities from the definition. The pair is useful if and only if this system is
solvable, and in this case, any solution gives rise to a witness vector.

If for a given submodule $W$ we have found a useful pair, then we can compute a denominator
bound with respect to $W$ by the following algorithm.

\begin{algorithm}\label{algo:1}
\emph{Input:} An equation of the form \eqref{eq:main}, a submodule $W$ of~$\set Z^r$,
   a useful pair $(\vec p,\vec p')$ of~$S$ for~$W$.
\emph{Output:} A denominator bound for \eqref{eq:main} with respect to~$W$.

\begin{algo}%
  \text{Set $t:=r-\dim W$.}
  \text{Choose $\vec v_1,\vec v_2,\dots,\vec v_t\in\set Z^r$ such that $\vec v_1$ is a witness}\label{alg:1:2}
  ^^I^^I^^I\text{vector for $(\vec p,\vec p^\prime)$ and $\set Z^r=V\oplus W$ where}
  ^^I^^I^^I\text{$V$ is the module generated by these vectors.}
  \text{Perform a change of variables as described in}
  ^^I^^I^^I\text{Section~\ref{sec:normal} such that $\vec v_i$ becomes the $i$th unit}
  ^^I^^I^^I\text{vector in $\set Z^r$, $W$ becomes $\{0\}^t\times\set Z^{r-t}$.}
  \text{Determine $A,B$ as in Theorem~\ref{lem:2}.}\label{a:1:11}
  \text{Compute $s\in\set N\cup\{-\infty\}$ as defined in Theorem~\ref{lem:2}.}
  \text{Choose an inner vector $\vec v\in\set R^r$ for $\vec p$.}
  \text{Compute $R^-$ as defined in Lemma~\ref{lemma:1}.}
  \text{Compute $d$ as defined in Theorem~\ref{thm:db}.}
  \text{Apply the inverse change of variables to~$d$, getting $d^\prime$.}
  \text{Return $d^\prime$.}%
\end{algo}
\end{algorithm}

The following variations can be applied for further improvements:

\begin{enumerate}
\item If the dimension of $V$ is larger than 1, there might be different choices
  of witness vectors. Choosing different versions in line~\ref{alg:1:2} might
  lead to different denominator bounds of $W$, say, $d_1,\dots,d_k$. Then taking
  $d:=\gcd(d_1,\dots,d_k)$ leads to a sharper denominator bound
  for~\eqref{eq:main} w.r.t.\ $W$.

\item Choosing different inner vectors in line~10 might lead to different sets $R^-$
to write~\eqref{eq:reduced} and hence gives rise to different denominator bounds in~\eqref{Equ:DenBoundPlance}. Taking the gcd of these denominator bounds produces a refined version.
\end{enumerate}

We remark that the coefficients $a_{\vec{s}}$ with $\vec{s}\in S$ are often available in factorized form. Then also the denominator bounds are obtained in factorized form, and the gcd-computations reduce to comparisons of these factors and bookkeeping of their multiplicities.

\section{A Combined Denominator Bound}

As mentioned earlier, when setting $W=\{0\}$, one is able to derive an aperiodic
denominator bound for equation~\eqref{eq:main}. In this particular case, for
each corner point $\vec p$ there is an other corner point $\vec p'$ such that
$(\vec p,\vec p')$ is useful for $W$. Hence applying Algorithm~\ref{algo:1} for
any useful pair leads to an aperiodic denominator bound. In particular, running
through all corner points and taking the gcd for all these candidates leads to a
rather sharp aperiodic denominator bound for equation~\eqref{eq:main} which
coincides with the output given in our previous investigation~\cite{kauers10b}.

In the other extreme, when setting $W=\set Z^r$, a denominator bound
for~\eqref{eq:main} w.r.t.\ $W$ would lead to a complete denominator bound for
equation~\eqref{eq:main}. However, in this case, we will fail to find 
a useful pair $(\vec p,\vec p')$, and our Algorithm~\ref{algo:1} is not
applicable.

Our goal is to find a simultaneous denominator bound with respect to all $W$
to which Algorithm~\ref{algo:1} is applicable, i.e., for all $W$ from the set
$$
 U:=\{W\text{ submodule of }\set Z^r\mid \exists\ (\vec p,\vec p') \text{ useful for }W\}.
$$
In general, this is an infinite set. But we can make use of the observations
made after Example~\ref{ex:2}. Using Lemma~\ref{lem:2}, it turns out that instead 
of looping through all these infinitely many modules~$W$, it is sufficient to
consider those $W$ which appear as spread of some factor in the coefficient of
$a_{\vec p}$.

This argument even works for all $W$ in the larger set
\begin{alignat*}1
O:=\{&W\text{ submodule of }\set Z^r\mid\\
&\quad \exists\ (\vec p,\vec p') \text{ opposite with $\vec p$ not useless for $W$}\},
\end{alignat*}
but since the $W\in O\setminus U$ do not satisfy the conditions of Theorem~\ref{thm:disp},
we can only obtain partial information about their denominator bounds. 

We propose the following algorithm. 


\begin{algorithm}\label{algo:2}
\emph{Input:} An equation of the form \eqref{eq:main}.
\emph{Output:} A finite set of irreducible polynomials $P=\{p_1,\dots,p_k\}$, and
  a nonzero $d\in\set K[\vec n]$ such that for every solution $y=\frac{p}{q}\in\set K(\vec n)$ of~\eqref{eq:main} and every irreducible factor $u$ of $q$ with multiplicity $m$ exactly one of the following holds:\\
  1. $\spread(u)\in U$ and $u^m\mid d$,\\
  2. $\spread(u)\in O\setminus U$ and $\exists\ \vec s\in\set Z^r, p\in P: N^{\vec s}u=p$,\\
  3. $\spread(u)\notin O$.

\begin{algo}%
  d:=1
  P:=\{\}
  C:=\{\vec p\in S: \vec p\text{ is a corner point of $S$}\}
  |forall| \vec q\in C |do|
  ^^I |forall| u\mid a_{\vec q}\text{ irreducible} |do|\label{alg:2:5}
  ^^I^^I W := \spread(u)
  ^^I^^I |if| W\in U |then|
  ^^I^^I^^I \text{Compute a denominator bound $d_0$ w.r.t.\ $W$}\label{alg:2:9}
  ^^I^^I^^I^^I^^I^^I \text{using an arbitrary useful pair for $W$.}
  ^^I^^I^^I d := \lcm(d,d_0)\label{alg:2:11}
  ^^I^^I |else\ if| W\in O |then|
  ^^I^^I^^I P := P \cup\{u\}\label{alg:2:12}
  |return| (P,d)%
\end{algo}
\end{algorithm}

\begin{theorem}\label{thm:combined}
  The polynomial $d$ computed by Algorithm~\ref{algo:2}
  is a denominator bound with respect to any finite union of modules in~$U$.
\end{theorem}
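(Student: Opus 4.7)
The plan is to peel the union apart via Lemma~\ref{lemma:lcm}, reduce to a single $W \in U$, and then argue that Algorithm~\ref{algo:2} encounters every relevant factor. Let $W^\star = W_1 \cup \cdots \cup W_m$ be a finite union of submodules $W_i \in U$. Each $W_i$ is a submodule of $\set Z^r$, so Lemma~\ref{lemma:lcm} reduces the claim to showing that $d$ is individually a denominator bound with respect to each $W_i$. Fix $W \in U$, a solution $y = p/q$ of~\eqref{eq:main}, and an irreducible factor $u$ of $q$ with $\spread(u)\subseteq W$; the task is to show $u \mid d$.

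Set $W' := \spread(u)$, which by Theorem~\ref{thm:1}(1) is a submodule of $\set Z^r$ contained in $W$. I first observe that $W' \in U$: if $(\vec p,\vec p')$ is a useful pair for $W$, it is automatically opposite independent of $W$, and since any edge $(\vec p,\vec s)$ satisfying $\vec p-\vec s\in W'$ also satisfies $\vec p-\vec s\in W$, neither point can be useless for $W'$ without contradicting not-uselessness for $W$. Hence the same pair is useful for $W'$, and $W' \in U$.

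Next I want to invoke Lemma~\ref{lem:2} for $u$ and $W'$. For this I need a corner point $\vec p$ of $S$ that carries an inner vector orthogonal to $W'$, which is obtained from the useful-pair data (possibly after replacing $\vec p$ by $\vec p'$; the existence of such a corner point is a convex-geometric consequence of the opposite and not-useless conditions). Lemma~\ref{lem:2} then yields $\vec j \in \set Z^r$ such that $u_0 := N^{\vec j}u$ is an irreducible factor of $a_{\vec p}$ with $\spread(u_0)=W'$. Since $W'\in U$, the iteration of Algorithm~\ref{algo:2} at $\vec q=\vec p$ processing $u_0$ enters the branch of line~\ref{alg:2:9} and invokes Algorithm~\ref{algo:1}, producing a denominator bound $d_0$ with respect to $W'$. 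By definition $u\mid d_0$, and line~\ref{alg:2:11} gives $d_0\mid d$, so $u\mid d$, as required.

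The delicate step is the geometric argument producing the corner point in the previous paragraph: one must show that the combinatorial data of a useful pair for $W'$ implies the existence of some corner point of $S$ with an inner vector orthogonal to $W'$. Small examples (e.g.\ with $r=3$ and a rank-one $W$ transverse to the standard simplex) show that this corner point is not always the designated $\vec p$ of the pair, so in general one needs a Farkas-type separation combining the opposite condition (strictness in the $\vec p$-to-$\vec p'$ direction) with the not-useless condition (ruling out exactly the edge directions that would obstruct the orthogonality). This is the bridge between the purely combinatorial useful-pair formalism and the analytic hypotheses of Lemma~\ref{lem:2}, and is where the proof has to do real work.
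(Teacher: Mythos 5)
Your proposal follows the paper's own proof closely: reduce to a single $W\in U$ via Lemma~\ref{lemma:lcm}, invoke Lemma~\ref{lem:2} to obtain a shift of $u$ dividing some $a_{\vec q}$ at a corner point, and observe that Algorithm~\ref{algo:2} then produces $d_0$ with $u\mid d_0\mid d$. Two of your added remarks are worth noting. The argument that $W':=\spread(u)\in U$ (since uselessness for $W'\subseteq W$ is a stronger condition than uselessness for~$W$, so the same useful pair works) is correct and is in fact needed for line~\ref{alg:2:9} of Algorithm~\ref{algo:2} to fire; the paper's proof assumes this silently. And the geometric step you flag is a loose end that the paper's own proof shares: it says ``Since $\vec p$ is not useless, Lemma~\ref{lem:2} implies\dots'', but Lemma~\ref{lem:2} requires a corner point with an inner vector \emph{orthogonal to}~$W$, which is not implied by ``$\vec p$ is not useless'' alone. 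For instance, with $S=\{(0,0),(1,1),(1,-1)\}$ and $W=\binom{1}{0}\set Z$, the pair $\bigl((0,0),(1,1)\bigr)$ is useful for~$W$, yet $(0,0)$ carries no inner vector orthogonal to~$W$, because the other two points lie on opposite sides of the horizontal line through it. As you suggest, the remedy is to find a different corner point that does carry such a vector (here $(1,1)$ or $(1,-1)$ works) and rely on the fact that Algorithm~\ref{algo:2} loops over all corner points; but that existence claim deserves a real separation argument, and neither your proposal nor the printed proof supplies it.
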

\begin{proof}
  Let $W$ be in~$U$ and $(\vec p,\vec p')$ be a useful pair with respect to~$W$.
  Let $y=p/q$ be a solution of~\eqref{eq:main} and $u$ be an irreducible factor
  of $q$ with multiplicity~$m$ and $\spread(u)\subseteq W$.
  We have to show that $u^m\mid d$.

  Since $\vec p$ is not useless, Lemma~\ref{lem:2} implies that there is some
  $\vec i\in\set Z^r$ with $N^\vec i u\mid a_{\vec p}$. This factor is going
  to be investigated in some iteration of the loop starting in line~\ref{alg:2:5}.
  The polynomial $d_0$ computed in this iteration is a denominator bound with
  respect to $\spread(N^\vec i u)=\spread(u)$ and hence a forteriori a 
  denominator bound with respect to~$W$. 
  It follows that $u^m\mid d_0\mid d$.

  This proves the theorem when $W$ itself is in~$U$. If $W$ is only a finite union
  of elements of~$U$, the theorem follows from here by Lemma~\ref{lemma:lcm}.
\end{proof}

For $W\in O\setminus U$, we can still apply Lemma~\ref{lem:2} but
Theorem~\ref{thm:disp} is no longer applicable. This prevents us from computing
precise denominator bounds with respect to these~$W$. However, using the set
$P=\{p_1,\dots,p_k\}$ returned by the algorithm we can at least say that 
for every denominator $q$ of a solution $y=p/q$ of~\eqref{eq:main} there
exist $m\in\set N$ and a finite set $S'\subseteq\set Z^r$ such that
\begin{equation}
   d\,\prod_{\substack{p\in P\\ \vec s\in S'}}N^{\vec s}p^m
\end{equation}
is a multiple of every divisor of $q$ whose spread is contained in some finite
union of modules in~$O$. Appropriate choices $S'$ and $m$ can be found for instance by making an
ansatz. Note also that the set~$P$ is usually smaller than the set of all
periodic factors that occur in the coefficients $a_{\vec s}$
of~\eqref{eq:main}. This phenomenon was demonstrated already in the second part
of Example~\ref{ex:2}.


Summarizing, some part of the denominator is out of reach, namely all those parts of the denominator w.r.t.\ the modules from
\begin{align*}
\{&W\text{ submodule of }\set Z^r\mid\\
&\quad \forall\ (\vec p,\vec p') \text{ opposite for $W$ with $\vec p$ and $\vec p'$ useless for $W$}\},
\end{align*}
some part of the denominator can be given up to possible shifts and multiplicities, and a big part of our denominator bound can be given explicitly by~$d$.

The following improvements can be utilized.

\begin{enumerate}
\item As preprocessing step, one should compute an aperiodic denominator bound for the equation~\eqref{eq:main} as described above. What remains is to recover the periodic factors. As a consequence, one can neglect all irreducible factors $u$ which are aperiodic and one can apply Theorem~\ref{thm:db} where all aperiodic factors are removed from the polynomials $a'_{\vec{p}}$.

\item Choosing different useful pairs for a module $W$ in line~9 might lead to different choices of denominator bounds, and taking their gcd gives rise to sharper denominator bounds of~\eqref{eq:main} w.r.t.\ $W$.
\end{enumerate}

\section{Discussion}

Typically the set $O$ will contain all the submodules of $\set Z^r$. Only when
the convex hull of $S$ happens to have two parallel edges on opposite sides, as
is the case in Example~\ref{ex:2}, then modules $W$ parallel to this edge do not
belong to~$O$.  The set $U$ will never contain all the submodules of~$\set
Z^r$. Precisely those modules $W$ which are parallel to an edge of the convex
hull of~$S$ do not belong to~$U$. Since the convex hull of $S$ contains only
finitely many edges, $U$~will in some sense still contain almost all the
submodules of~$\set Z^r$.

Depending on the origin of the equation, it may be that there is some freedom in
the structure set~$S$. For example, by multivariate guessing~\cite{kauers09a} or
by creative telescoping~\cite{zeilberger91,chyzak00,schneider04c} one can
systematically search for equations with a prescribed structure set. In such
situations, one can try to search for an equation with a structure set for which
$U$ and~$O$ cover as many spaces as possible.

If two equations with different structure sets are available, it may be possible
to combine the two denominator bounds obtained by Algorithm~\ref{algo:2} to a
denominator bound with respect to the full space $\set Z^r$.

\begin{example}
  Consider the following system of equations:
  \begin{alignat*}1
    &-(k+n+1) (2 k+3 n+1) y(n,k)\\
    &\quad+(k+n+4) (2 k+3 n+3) y(n,k+1)\\
    &\qquad-(k+n+2) (2 k+3 n+4)y(n+1,k)\\
    &\qquad\quad+(k+n+5) (2 k+3 n+6) y(n+1,k+1) =0 ,\\
    &(n^2+n+1) (2 k+3 n+3) y(n,k+1)\\
    &\quad-(n^2+5 n+7) (2 k+3 n+4)y(n+1,k)\\
    &\qquad-(n^2+3 n+3) (2 k+3 n+8) y(n+1,k+2)\\
    &\qquad\quad+(n^2+7 n+13) (2 k+3 n+9) y(n+2,k+1).
  \end{alignat*}
  Algorithm~\ref{algo:2} applied to the first equation returns
  \[
    d=(n+k+1)(n+k+2)(n+k+3)(3n+2k+1)
  \]
  as a denominator bound with respect to any $W$ except $\binom10\set Z$ and
  $\binom01\set Z$. Applied to the second equation, it returns
  \[
    d=(n^2+n+1)((n+1)^2+(n+1)+1)(3n+2k+1)
  \]
  as a denominator bound with respect to any $W$ except $\binom11\set Z$ and
  $\binom1{-1}\set Z$. The least common multiple of the two outputs is a
  simultaneous denominator bound with respect to any~$W$.

  Indeed, the system has the solution
  \[
    \tfrac1{(n+k+1)(n+k+2)(n+k+3)(n^2+n+1)((n+1)^2+(n+1)+1)(3n+2k+1)}.
  \]
\end{example}

There is no hope for an algorithm which computes for any given single equation a
denominator bound with respect to the full space~$\set Z^r$. This is because
there are quations whose solution space contains rational functions with no
finite common denominator. For instance, for every univariate polynomial~$p$, we
have that $1/p(n+k)$ is a solution of the equation
\[
  y(n+1,k)-y(n,k+1)=0.
\]
It would be interesting to characterize under which circumstances this happens,
and to have an algorithm which finds a denominator bound with respect to $\set Z^r$
in all other cases. 

\bibliographystyle{plain}
\bibliography{all}

\end{document}